\documentclass[12pt]{article}
\usepackage[utf8]{inputenc}
\usepackage{amsmath}
\usepackage{amssymb}
\usepackage{amsthm}
\usepackage{graphicx}
\usepackage{caption}
\usepackage{subcaption}
\usepackage{amsfonts}
\usepackage{verbatim}
\usepackage{abstract}

\pagestyle{plain}
\hoffset = -11mm
\voffset = -20mm
\textwidth = 155mm
\textheight = 235mm
\usepackage{indentfirst}

\renewcommand{\leq}{\leqslant}
\renewcommand{\geq}{\geqslant}

\newtheorem{theorem}{Theorem}
\newtheorem{lemma}{Lemma}

\newtheorem{corollary}{Corollary}

\DeclareGraphicsRule{*}{mps}{*}{}

\title{On computational complexity of length embeddability of graphs}
\author{Mikhail Tikhomirov}
\date{}

\begin{document}

\maketitle

\begin{abstract}
A graph $G$ is embeddable in $\mathbb{R}^d$ if vertices of $G$ can be assigned with points of $\mathbb{R}^d$ in such a way that all pairs of adjacent vertices are at the distance 1. We show that verifying embeddability of a given graph in $\mathbb{R}^d$ is NP-hard in the case $d > 2$ for all reasonable notions of embeddability.
\end{abstract}

\section{Introduction}

The \emph{distance graph of $S \subset \mathbb{R}^d$} is defined as the graph $G = (V, E)$, where $V = S$ and $E$ is the set of all pairs of points $x, y \in S$ such that $x$ and $y$ are at the distance 1. A graph is a \emph{distance graph in $\mathbb{R}^d$} if it is isomorphic to the distance graph of some set $S \subset \mathbb{R}^d$. Some famous problems concerning distance graphs are the Erd\H{o}s' unit distance problem on the maximal number of unit distances between $n$ points in $\mathbb{R}^2$ (see \cite{brass2005research}, \cite{erdos1946sets}, \cite{chilakamarri1993unit}),
the Hadwiger--Nelson problem on the chromatic number of $\mathbb{R}^2$ (see \cite{brass2005research}, \cite{de1951colour}, \cite{raigorodskii2001borsuk}), etc.; surveys of various results about distance graphs can be found at \cite{raigorodskii2013coloring}, \cite{raigorodskii2014cliques}.

We also consider a similar notion of \emph{embeddability in $\mathbb{R}^d$} (see, e.g., \cite{saxe1980embeddability}).
A graph $G = (V, E)$ is \emph{embed\-dable in $\mathbb{R}^d$} if there exists a mapping $\varphi: V \rightarrow \mathbb{R}^d$ such that
$||\varphi(u) - \varphi(v)||_{\mathbb{R}^d} = 1$ for all pairs $(u, v) \in E$. It is clear that any distance graph in $\mathbb{R}^d$
is embeddable in $\mathbb{R}^d$ but the converse does not always hold. These two notions differ in the following:

\begin{itemize}

\item Different vertices of an embeddable graph may be assigned with the same point in $\mathbb{R}^d$ while all vertices of a distance graph should be assigned with pairwise distinct points.

\item Non-adjacent vertices of an embeddable graph can be located at the distance 1 while non-adjacent vertices of a distance graph are forbidden to be placed at distance~1.

\end{itemize}

We will say that an embedding $\varphi: V \rightarrow \mathbb{R}^d$ is \emph{strict} if $\forall u, v \in V$                                     
$(u, v) \in E \iff ||\varphi(u) - \varphi(v)||_{\mathbb{R}^d} = 1$; we will say that an embedding $\varphi: V \rightarrow \mathbb{R}^d$ is \emph{injective} if $\forall u, v \in V$
$v \neq u \Rightarrow \varphi(v) \neq \varphi(u)$. It is clear that a graph $G$ is a distance graph in $\mathbb{R}^d$ iff there exists a strict and injective embedding of $G$ in $\mathbb{R}^d$. Thus we obtain four different notions of embeddability (strict/non-strict, injective/non-injective) which include two notions described above.

For each of the four notions of embeddability in $\mathbb{R}^d$ we can pose the computational decision problem of determining
embeddability of chosen type for the given graph; we shall call this problem
\textbf{$\mathbb{R}^d$-UNIT-DISTANCE-(STRICT)-(INJECTIVE)-EM\-BED\-DABILITY} 
depending on the em\-beddability type. The computational complexity of these problems is studied in \cite{saxe1980embeddability}, \cite{horvat2011computational}.
In \cite{horvat2011computational} it is shown that
\textbf{$\mathbb{R}^d$-UNIT-DISTANCE-(STRICT)-(INJECTIVE)-EM\-BED\-DA\-BILITY} is NP-hard for each type of em\-beddability and each value of $d \geq 2$.
Unfortunately, the proof in \cite{horvat2011computational} for the case $d > 2$ is false as it is based on the result \cite{lovasz1983self} due to Lov{\'a}sz
which states the upper bound $d+1$ for the chromatic number of the $d$-dimensional sphere circumscribed about a regular simplex on $d+1$ vertices with unit length edges.
In \cite{raigorodskii2010chromatic}, \cite{raigorodskii2012chromatic} Raigorodskii points out that this bound is wrong and proves
an exponential lower bound of this value; thus a new proof is needed for the case $d > 2$, which is the point of this paper.

The main result is

\begin{theorem}

Computational problems $\mathbb{R}^d$-UNIT-DISTANCE-EMBEDDABILITY, $\mathbb{R}^d$-UNIT-DISTANCE-STRICT-EMBEDDABILITY,
$\mathbb{R}^d$-UNIT-DISTANCE-INJECTIVE-EMBEDDABILITY, $\mathbb{R}^d$-UNIT-DISTANCE-STRICT-INJECTIVE-EMBEDDABI\-LI\-TY
are NP-hard for each $d > 2$.

\end{theorem}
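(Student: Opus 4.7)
The plan is to prove NP-hardness by a polynomial-time reduction from 3-SAT. Given a 3-CNF formula $\phi$ with variables $x_1,\ldots,x_n$ and clauses $C_1,\ldots,C_m$, I would construct a graph $G_\phi$ of size polynomial in $|\phi|$ such that $G_\phi$ is embeddable in $\mathbb{R}^d$ --- under the chosen variant of embeddability --- if and only if $\phi$ is satisfiable. The same underlying construction should serve all four notions, with local augmentations used to enforce strictness and injectivity. An alternative to reducing from 3-SAT is to reduce from the $d=2$ case (established correctly in the paper cited in the introduction), but constraining a $d$-dimensional embedding to lie in a fixed $2$-dimensional flat via unit-distance gadgets turns out to collapse the problem onto a $2$-sphere rather than onto $\mathbb{R}^2$, so I would prefer the direct reduction from 3-SAT.

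The core building block is a rigid \emph{scaffolding}: a subgraph whose embedding in $\mathbb{R}^d$ is unique up to isometry. The natural choice is the clique $K_{d+1}$ with unit edges, realized by a regular $d$-simplex. Relative to this scaffolding, each variable $x_i$ is encoded by a \emph{variable gadget} whose embeddings fall into exactly two isometry classes, corresponding to $x_i=0$ and $x_i=1$; a typical choice is a single vertex $v_i$ made adjacent to $d$ chosen scaffolding vertices, which forces $v_i$ into one of two positions reflected across the affine hull of its scaffolding neighbors. Each clause $C_j=(\ell_{j,1}\vee\ell_{j,2}\vee\ell_{j,3})$ is then encoded by a \emph{clause gadget} attached to the three relevant variable vertices and designed to be embeddable in $\mathbb{R}^d$ exactly when at least one of the three literals is in its ``true'' position.

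The main obstacle I expect is the design and correctness proof of the clause gadget. In $\mathbb{R}^d$ with $d>2$ the additional degrees of freedom make it delicate to rule out unintended embeddings that would spuriously satisfy an unsatisfiable clause, and the gadget must moreover be specified uniformly for all $d>2$. Once such a gadget is available, the equivalence between satisfying assignments of $\phi$ and embeddings of $G_\phi$ follows from the usual dichotomy (a satisfying assignment yields an explicit embedding; conversely, any embedding determines an assignment by reading off the positions of the variable vertices relative to the scaffolding). The strict and injective variants can then be handled by standard enhancements --- adjoining auxiliary witness vertices that forbid unit distances between chosen non-adjacent pairs, and that separate otherwise coincident ones --- so that all four notions of embeddability collapse to the same reduction.
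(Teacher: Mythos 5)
There is a genuine gap: your proposal defers exactly the step that constitutes the entire difficulty of the theorem. You write that ``the main obstacle I expect is the design and correctness proof of the clause gadget,'' and then never construct it. A clause gadget that is embeddable in $\mathbb{R}^d$ iff at least one of three literal vertices sits in its ``true'' position is not a routine object: unit-distance constraints can only force distances to \emph{equal} prescribed values, and in $d>2$ dimensions the surplus degrees of freedom make it very hard to exclude unintended embeddings. Without this gadget there is no reduction. The secondary plan for the strict/injective variants is also unsound as stated: ``adjoining auxiliary witness vertices that forbid unit distances between chosen non-adjacent pairs'' is not something a length-embedding constraint system can express (it can impose equalities, not inequalities). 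The correct way to make one reduction serve all four problems is the dichotomy the paper proves: if the source instance is a no-instance then the target graph has \emph{no} embedding of any kind, and if it is a yes-instance then the target graph has a \emph{non-critical} embedding (injective, with no spurious unit distances and no three collinear points), which is simultaneously a witness for all four variants; non-criticality is obtained not by extra gadgets but by a genericity argument (rotating attached subgraphs about an axis and showing the bad rotations have measure zero).

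For comparison, the paper reduces from \textbf{3-COLORING}, not 3-SAT, and its geometric core is different from your simplex scaffolding. It attaches every ``color'' vertex at unit distance to a common clique $K$ of size $d-1$, which confines all of them to a fixed circle of radius $r_0=\sqrt{d/(2(d-1))}$; the combinatorics then live entirely in angular positions on that circle. Three marker vertices $u_0,u_1,u_2$ are pinned near the vertices of an inscribed equilateral triangle, cutting the circle into three arcs that play the role of the three colors, and two-edge gadgets (Lemma 4) force each pair of adjacent vertices of $G$ to be far enough apart on the circle that they cannot share an arc. All of these distance constraints are realized by genuine unit-distance graphs via ``rods'': the $d$-dimensional Moser spindle gives a rod of length $D=\sqrt{2(d+1)/d}$, products of rods give lengths $D^k$, and a fan of simplices around a $(d-1)$-clique together with the irrationality of $\frac{1}{\pi}\arccos\frac1d$ (Niven's theorem) gives rods of length arbitrarily close to any prescribed value. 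If you want to salvage your outline, the honest assessment is that the clause gadget would require machinery of comparable weight, and you have supplied none of it.
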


To prove this result we contrust a reduction of the classic NP-complete problem of graph vertex 3-coloring (\textbf{3-COLORING})
(see \cite{karp1972reducibility}) to each of the four embeddability problems: for any given graph $G$ we explicitly construct a graph $H = $
3-COLORING-$\mathbb{R}^d$-UNIT-DISTANCE-EMBEDDABILITY-RE\-DUC\-TION$(G)$ such that the size of $H$ is linear in the size of $G$ (for every fixed $d$) and the following conditions hold:

\begin{itemize}

\item If no valid vertex 3-coloring of $G$ exists, then there is no embedding of $H$ in $\mathbb{R}^d$;

\item If a valid vertex 3-coloring of $G$ exists, then there is a strict injective embedding of $H$ in $\mathbb{R}^d$.

\end{itemize}

The possibility of such construction implies NP-hardness
of all four mentioned problems. It should be mentioned that the question whether the described problems lie in NP is open.

\section{Notion of rod}

Let us introduce some necessary definitions.

A \emph{weighted graph} $G = (V, E, w)$ is an ordered triple such that $(V, E)$ is a graph and $w: E \rightarrow \mathbb{R}_+$ is a function that assigns a positive number to each element of $E$; for every edge $e \in E$ we will say that $w(e)$ is the \emph{length} of the edge $e$.
If $w \equiv 1$, the weighted graph $G$ is called a \emph{unit distance graph}.
A \emph{length embedding} (or, more simply, an \emph{embedding}) of the weighted graph $G = (V, E, w)$ in $\mathbb{R}^d$ is a map
$\varphi: V \rightarrow \mathbb{R}^d$ such that $\forall u, v \in V$
$(u, v) \in E \Rightarrow ||\varphi(u) - \varphi(v)||_{\mathbb{R}^d} = w((u, v))$.

\emph{\textbf{Remark}: In the sequel, we will identify vertices of the graph with points of $\mathbb{R}^d$~--- their images under the embedding if that doesn't cause confusion.}

An embedding $\varphi$ of the weighted graph $G = (V, E, w)$ in $\mathbb{R}^d$ is called \emph{non-critical} if the following conditions hold:

\begin{itemize}

\item $\forall u, v \in V$ $u \neq v \Rightarrow \varphi(u) \neq \varphi(v)$ (no two vertices are at the same point);

\item $\forall u, v \in V$ $(u, v) \notin E \Rightarrow ||\varphi(u) - \varphi(v)||_{\mathbb{R}^d} \neq 1$ (no two non-adjacent vertices are at the distance 1);

\item no three vertices lie on a (one-dimensional) straight line.

\end{itemize}

Consider a weighted graph $G = (V, E, w)$ and a pair of its vertices $u, v \in V$. The graph $G$ is called a \emph{($d$-dimensional) $(u, v)$-rod of length $l$} if the following conditions hold:

\begin{itemize}

\item the distance between vertices $u$ and $v$ is equal to $l$ in each embedding of $G$ in $\mathbb{R}^d$;

\item there exists a non-critical embedding of $G$ in $\mathbb{R}^d$.

\end{itemize}

If a unit distance graph $G$ is also a $d$-dimensional $(u, v)$-rod of length $l$, we call $G$ a \emph{$d$-dimensional unit distance $(u, v)$-rod of length $l$}.

A weighted graph $G = (V, E, w)$ is called a \emph{(unit distance) $d$-dimensional rod of length $l$} if there exist two vertices $u, v \in V$ such that $G$ is a (unit distance) $d$-dimensional $(u, v)$-rod of length $l$.

We suppose that $d$ is a fixed constant throughout the whole paper, thus in the sequel we will write ``rod'' for ``$d$-dimensional rod''.

\begin{lemma}
Let $G = (V_G, E_G, w_G)$, $H = (V_H, E_H, w_H)$ be weighted graphs. Suppose $V_G \cap V_H = \{u, v\}$, $e = (u, v) \in E(G)$, $w_G(e) = l$,
and $H$ is a $(u, v)$-rod of length $l$. Let $G'~=~(V_{G'}, E_{G'}, w_{G'})$, where $V_{G'} = V_G \cup V_H$,
$E_{G'} = (E_G \backslash \{e\}) \cup E_H$, 
$w_{G'} = w_G I(E_G \backslash \{e\}) + w_H I(E_H)$ (informally, we replace the edge $e$ in $G$ by the subgraph $H$ to obtain $G'$). Then:

\begin{itemize}

\item If there is no embedding of $G$ in $\mathbb{R}^d$, then there is no embedding of $G'$ in $\mathbb{R}^d$.

\item If there exists a \textbf{non-critical} embedding of $G$ in $\mathbb{R}^d$, then there exists a 
\textbf{non-critical} embedding of $G'$ in $\mathbb{R}^d$.

\end{itemize}
\end{lemma}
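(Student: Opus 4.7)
The proof splits into two parts, one for each bullet; the second is the substantial one.

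For the first bullet I argue by restriction. Let $\varphi'\colon V_{G'}\to\mathbb{R}^d$ be any embedding of $G'$. Its restriction to $V_H$ preserves every edge-weight of $H$ by construction of $G'$, hence is an embedding of $H$; since $H$ is a $(u,v)$-rod of length $l$, this forces $\|\varphi'(u)-\varphi'(v)\|=l=w_G(e)$. The restriction $\varphi'|_{V_G}$ then satisfies every edge condition of $G$---those in $E_G\setminus\{e\}$ directly, and $e$ itself by the rod property---so it is an embedding of $G$, contradicting the hypothesis.

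For the second bullet, fix a non-critical embedding $\varphi$ of $G$ and a non-critical embedding $\psi$ of $H$ (which exists by the second clause of the rod definition). Both segments $[\varphi(u),\varphi(v)]$ and $[\psi(u),\psi(v)]$ have length $l$, so some rigid motion $T$ takes $\psi(u),\psi(v)$ to $\varphi(u),\varphi(v)$. Defining $\varphi'=\varphi$ on $V_G$ and $\varphi'=T\circ\psi$ on $V_H$ yields a well-defined map agreeing on $V_G\cap V_H=\{u,v\}$ and obeying every edge condition of $G'$; it is an embedding. However, the image $T\psi(V_H\setminus\{u,v\})$ may accidentally coincide with, lie at unit distance from, or be collinear with vertices of $\varphi(V_G)$, so $\varphi'$ need not yet be non-critical.

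To eliminate these accidents I replace $T$ by $R\circ T$ for a suitable rotation $R$ in the stabilizer of the axis $\ell$ through $\varphi(u),\varphi(v)$. This stabilizer is conjugate to $SO(d-1)$, a connected real-algebraic group of positive dimension since $d\geq 3$. For each $R$ I form the embedding $\varphi'_R$ and observe that every potential non-critical failure---coincidence of two vertices, unit distance between a non-edge pair, collinearity of three vertices---imposes a polynomial condition on the entries of $R$. The set of ``bad'' $R$ is therefore a finite union of real-algebraic subvarieties of $SO(d-1)$; since $SO(d-1)$ is irreducible, it suffices to check that each of these subvarieties is a \emph{proper} one, after which a good $R$ can be drawn from the non-empty dense complement.

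The main obstacle is precisely this non-triviality check. Failures involving only vertices of $V_G$ or only vertices of $V_H$ are excluded directly by non-criticality of $\varphi$ and $\psi$, and distances from an $V_H$-vertex to $\varphi(u)$ or $\varphi(v)$ do not depend on $R$ (they are distances to points on the axis $\ell$), so they reduce to non-criticality of $\psi$ as well. The only delicate cases mix a vertex $w\in V_H\setminus\{u,v\}$ with vertices of $V_G\setminus\{u,v\}$; here the bad condition could hold identically in $R$ only if $\psi(w)$ lay on the axis $\psi(u)\psi(v)$, which would place $u,v,w$ on a common line under $\psi$ and contradict non-criticality of $\psi$. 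Hence every vertex of $V_H\setminus\{u,v\}$ is off the axis, its orbit under $SO(d-1)$ has positive dimension, and each mixed condition genuinely cuts out a proper subvariety, completing the argument.
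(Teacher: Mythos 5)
Your proof is correct and follows essentially the same route as the paper: restriction of the embedding for the first claim, and for the second, gluing a non-critical embedding of $H$ onto the given embedding of $G$ along the segment $uv$ and then perturbing by a generic rotation about that axis, using non-criticality of both embeddings to rule out vertices sitting on the axis and hence to show each degeneracy condition is non-generic. The only difference is cosmetic: the paper certifies genericity by a measure-zero count on the $(d-2)$-sphere of rotation images rather than by irreducibility of $SO(d-1)$ and proper algebraic subvarieties.
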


\begin{proof}

In any embedding of $G'$ the distance between vertices $u$ and $v$ is equal to $l$. Suppose we have an embed\-ding of $G'$; we can erase all vertices outside $V_G$ to obtain an embedding of $G$. The first claim is thus proven.

Now consider a non-critical embedding $\varphi_G$ of the weighted graph $G$ in $\mathbb{R}^d$. Construct an embedding $\varphi_{G'}$ of $G'$ as follows:

\begin{itemize}
\item Let $\varphi_{G'}(x) = \varphi_G(x)$ for all $x \in V_G$;

\item Choose a non-critical embedding $\varphi_H$ of the weighted graph $H$ such that $\varphi_H(u) = \varphi_G(u)$, $\varphi_H(v) = \varphi_G(v)$ (such embedding exists since $||\varphi_G(u) - \varphi_G(v)|| = l$ and $H$ is a $(u, v)$-rod of length $l$); let $\varphi_{G'}(y) = \varphi_H(y)$ for all $y \in V_H$.

\end{itemize} 

It is clear that this definition of $\varphi_{G'}$ is consistent. However, it is possible that $\varphi_{G'}$ is not a non-critical embedding. Note
that no vertex of $V_{G'} \backslash \{u, v\}$ lies on the straight line $uv$ since the embeddings $\varphi_G$ and $\varphi_H$ are non-critical.

Let $S$ denote the set of all rotations of $\mathbb{R}^d$ about the line $uv$. $S$ is isomorphic to the $(d-2)$-dimensional sphere 
(each rotation can be assigned with the image of some point which doesn't lie on $uv$). For any $\psi \in S$ let $\psi *_H \varphi_{G'}$ denote the mapping from $V_{G'}$ in $\mathbb{R}^d$ such that $\psi *_H \varphi_{G'}(x) = \varphi_{G'}(x)$ for every $x \in V_G$ and $\psi *_H \varphi_{G'}(y) = \psi(\varphi_{G'}(y))$ for every $y \in V_H$; clearly, this definition is consistent. It is also clear that for every rotation $\psi \in S$ the mapping $\psi *_H \varphi_{G'}$ is an embedding of $G'$ in $\mathbb{R}^d$.

We now show that there exists a rotation $\psi \in S$ such that $\psi *_H \varphi_{G'}$ is a non-critical embedding of $G'$ in $\mathbb{R}^d$. Consider all $\psi \in S$ such that the embedding $\psi *_H \varphi_{G'}$ is not non-critical for some reason. In that case, one of the following conditions must hold:

\begin{itemize}

\item The embedding $\psi *_H \varphi_{G'}$ places two vertices of $G'$ (denote them $x$ and $y$) at the same point. It follows from the non-criticality of $\varphi_G$ and $\varphi_H$ that $x$ and $y$ cannot lie both in $V_G$ or both in $V_H$.
Thus WLOG $x\in V_G \backslash \{u, v\}$, $y \in V_H \backslash \{u, v\}$.

The vertex $y$ does not lie on the line $uv$ and no two rotations place $y$ at the same point. Therefore for every pair of vertices $x, y$ there is at most one rotation $\psi \in S$ that superposes $x$ and $\psi(y)$, thus the set of all rotations $\psi$ such that the embedding $\psi *_H \varphi_{G'}$ places some two vertices in the same point is finite and its spherical measure in $S$ is zero.

\item The embedding $\psi *_H \varphi_{G'}$ places two non-adjacent vertices of $G'$ (denote them $x$ and $y$ once more) at the distance 1. Once again, $x, y \in V_G$ or $x, y \in V_H$ leads to a contradiction; thus WLOG $x\in V_G \backslash \{u, v\}$, $y \in V_H \backslash \{u, v\}$.

Let $P_y$ denote the $(d - 2)$-dimensional sphere~--- the locus of the point $\psi(y)$ for all $\psi \in S$; the radius of $P_y$ is non-zero since $y$ does not lie on the line $uv$. If $||x - \psi(y)||_{\mathbb{R}^d} = 1$, then $\psi(y)$ lies on the $(d - 1)$-dimensional sphere of radius 1 centered at $x$; denote it $P_x$. We assume that the intersection of $P_x$ and $P_y$ is not empty.

If $P_x$ contains $P_y$ as a subset, then $x$ must lie on the line $uv$; that would contradict the non-criticality of $\varphi_G$. Otherwise, the intersection of $P_x$ and $P_y$ is a $(d - 3)$-dimensional sphere (possibly, of zero radius).

In any case, the set of rotations that place $x$ and $\psi(y)$ at the distance 1 has zero measure in $S$. Thus the set of rotations that place some two non-adjacent vertices at the distance 1 has zero measure in $S$.

\item The embedding $\psi *_H \varphi_{G'}$ places some three vertices on a straight line; denote these vertices $x, y, z$. 
Similarly to previous cases, if we assume $x, y, z \in V_G$ or $V_H$ we arrive at a contradiction.

WLOG, let $x, y \in V_G \backslash \{u, v\}$, $z \in V_H \backslash \{u, v\}$. Since the point $z$ can not lie on the line $uv$, the sphere  $\psi(z)$ for $\psi \in S$ has non-zero radius and the line $xy$ passes through the point $\psi(z)$ for at most two values of $\psi$.

Now let $x \in V_G \backslash \{u, v\}$ and $y, z \in V_H \backslash \{u, v\}$. 
The rotation $\psi \in S$ places the point $x$ on the line $\psi(yz)$ iff the point $\psi^{-1}(x)$
lies on the line $yz$ (here $\psi^{-1}$ means the inverse rotation of $\psi$), therefore in this case the line $yz$ must cross the locus of $\psi^{-1}(x)$ for all $\psi \in S$. Clearly, the locus is a sphere of non-zero radius, thus line $\psi(yz)$
passes through the point $x$ for at most two values of $\psi$.

It follows from the above that the set of rotations $\psi \in S$ such that $\psi *_H \varphi_{G'}$ places some three vertices on a straight line is finite.

\end{itemize}

To sum up, the set of rotations $\psi$ such that the embedding $\psi *_H \varphi_{G'}$ is not non-critical has zero measure in the $(d - 2)$-dimensional sphere of all possible rotations about the line $uv$. Therefore almost every rotation $\psi \in S$ yields a non-critical embedding $\psi *_H \varphi_{G'}$ of the graph $G'$ in $\mathbb{R}^d$. 

\end{proof}

\section{Construction of rods}

Let $h = \sqrt{\frac{d + 1}{2d}}$ denote the altitude length of a regular $d$-dimensional simplex with the edge length 1; denote $D = 2h$. Clearly, $D > \sqrt{2}$.

\begin{lemma}

Let $G$ and $H$ be unit distance rods of length $a$ and $b$ respectively. Then there exists a unit distance rod of length $ab$.

\end{lemma}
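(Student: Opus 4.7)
The plan is to build the desired rod $K$ by replacing every edge of $G$ by a fresh copy of $H$, which effectively scales the rod $G$ up by a factor of $b$.

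First, I would introduce the auxiliary weighted graph $G' = (V_G, E_G, w')$ with $w'(e) = b$ for every $e \in E_G$. In any embedding $\varphi'$ of $G'$, the rescaled map $\varphi'/b$ is an embedding of $G$, so $\|\varphi'(u_G) - \varphi'(v_G)\| = b \cdot a = ab$; thus $G'$ forces the distance $ab$ between $u_G$ and $v_G$.

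Next, I would apply Lemma~1 iteratively. Since $H$ is a $(u_H, v_H)$-rod of length $b$, every edge of $G'$ (which has weight $b$) can be replaced by a fresh copy of $H$ with its endpoints identified with $u_H$ and $v_H$. By Lemma~1, each single replacement preserves both the fixed distance $|u_G v_G| = ab$ and the existence of a non-critical embedding. After all $|E_G|$ replacements, the resulting graph $K$ contains only the unit edges supplied by the copies of $H$, so $K$ is a unit distance rod of length $ab$ between $u_G$ and $v_G$.

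The main obstacle is kicking off the induction: one must exhibit a non-critical embedding of $G'$ itself. Starting from a non-critical embedding $\psi$ of $G$, the scaled map $b \cdot \psi$ satisfies the edge constraints of $G'$ together with the ``no coincident vertices'' and ``no three collinear'' non-criticality conditions automatically; the only condition that may fail is ``no non-adjacent pair at distance $1$'', which after scaling translates to the requirement that no two non-adjacent vertices of $G$ lie at distance exactly $1/b$ under $\psi$. This is a codimension-one condition, avoided by a generic $\psi$; in the degenerate situation where $G$ is so rigid that no perturbation is available, I would first modify $G$ by replacing one of its unit edges with a small rod gadget (for instance the equilateral triangle on three unit edges, which is a $d$-dimensional rod of length $1$) to shift the pairwise distances without altering the rod property, and then carry out the scaling.
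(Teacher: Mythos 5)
Your construction is exactly the paper's: rescale every edge of $G$ to length $b$, then splice a copy of $H$ into each edge of the rescaled graph via Lemma~1. That part is fine and matches the paper's (one-sentence) proof. You go further and correctly flag a point the paper silently skips: the non-criticality condition ``no two non-adjacent vertices at distance $1$'' is not scale-invariant, so the rescaled embedding $b\cdot\psi$ need not be non-critical even when $\psi$ is. However, your patch for this does not work. The ``generic $\psi$'' claim is unsupported: a rod may be rigid, in which case all pairwise distances among its vertices are forced and no perturbation exists. And the fallback --- splicing an equilateral triangle (a rod of length $1$) into a unit edge --- cannot ``shift the pairwise distances'': by the construction in Lemma~1 the original vertices keep their positions ($\varphi_{G'}$ restricts to $\varphi_G$ on $V_G$), and in any case the rod property forces the same distance constraints, so every distance between vertices of $V_G$ is exactly what it was before. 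Since you single this out as the main obstacle and your resolution of it fails, your write-up is incomplete at precisely that point; to be fair, the paper's own proof does not address it either, and a genuine fix requires either restricting attention to the flexible rods actually used later or arguing separately that a non-critical embedding of the rescaled graph exists.
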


\begin{proof}

It suffices to make lengths of all edges of $G$ be equal to $b$ and successively apply Lemma 1 to every edge of the resulting graph and the graph $H$.
%\qedhere

\end{proof}

Consider a graph $M_d$ on a set of vertices $V_{M_d} = K_1 \cup K_2 \cup \{A, B, C\}$, $|K_1| = |K_2| = d$. Add the following edges of unit length to $M_d$:

\begin{itemize}

\item make cliques on $K_1$ and $K_2$;

\item connect the vertices $A$ and $B$ with every vertex of $K_1$;

\item connect the vertices $A$ and $C$ with every vertex of $K_2$;

\item finally, connect the vertices $B$ and $C$.

\end{itemize}

\begin{figure}[h]
\begin{centering}
\includegraphics{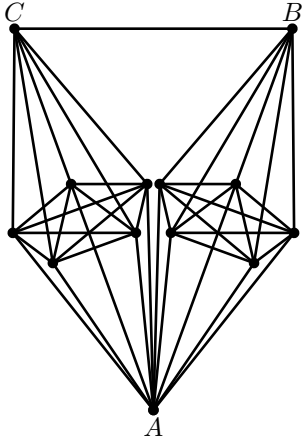}
\caption{5-dimensional Moser spindle}
\end{centering}
\end{figure}

The graph $M_d$ is called a \emph{$d$-dimensional Moser spindle} (the figure 1 illustrates a 5-dimensional Moser spindle). Is it easy to see that $M_d$ is a unit distance $d$-dimensional $(A, B)$-rod of length $D$.

Repeatedly applying Lemma 2 to copies of $M_d$, we arrive at

\begin{corollary}

For every non-negative integer $k$ there exists a unit distance $d$-dimensional rod of length $D^k$.

\end{corollary}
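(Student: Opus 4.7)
The plan is to proceed by induction on $k$, using the Moser spindle $M_d$ as the building block and Lemma 2 as the inductive step.

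For the base case $k = 0$, I would exhibit a trivial unit distance rod of length $D^0 = 1$: take the graph consisting of two vertices joined by a single unit edge. Any embedding is forced to place the endpoints at distance 1, and in $\mathbb{R}^d$ with $d \geq 2$ we can choose any two distinct points at distance 1, which is automatically a non-critical embedding (two vertices cannot violate the no-three-on-a-line condition, there are no non-adjacent pairs, and the points are distinct). Thus this graph is a unit distance rod of length 1.

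For the inductive step, assume the claim holds for $k - 1$, so there exists a unit distance rod $R_{k-1}$ of length $D^{k-1}$. Recall from the discussion preceding the corollary that the $d$-dimensional Moser spindle $M_d$ is a unit distance $(A, B)$-rod of length $D$; we can verify non-criticality for $M_d$ by a direct construction (the two unit simplices on $K_1 \cup \{A, B\}$ and $K_2 \cup \{A, C\}$ can be placed in general position after an arbitrary rotation about the line $AB$). Applying Lemma 2 with $G = R_{k-1}$ (of length $a = D^{k-1}$) and $H = M_d$ (of length $b = D$) yields a unit distance rod of length $D^{k-1} \cdot D = D^k$, completing the induction.

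The only conceptual point that requires care is that Lemma 2 as stated is about replacing a single edge by a rod, so to multiply the lengths of two rods one must first pre-process $R_{k-1}$ by scaling all its edges to length $D$ and then substitute a copy of $M_d$ for each edge. But this is exactly the content of Lemma 2 already established in the paper, so no extra work is required here. I do not expect any real obstacle: the corollary is a direct iterated application of Lemma 2 to the Moser spindle, with the trivial single-edge graph serving as the base case.
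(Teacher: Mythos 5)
Your proof is correct and follows the same route as the paper, which simply says ``repeatedly applying Lemma 2 to copies of $M_d$''; your induction with the single-edge graph as the $k=0$ base case just makes that iteration explicit. (Minor quibble: the edge-replacement operation you describe in your last paragraph is Lemma 1, while the length-multiplication statement you actually invoke is Lemma 2, but this does not affect the argument.)
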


\begin{lemma}

For all numbers $a, b$ such that $0 < a < b < 1$ there exists a number $l$ satisfying $a < l < b$ and a graph $G$ such that $G$ is a unit-distance $d$-dimensional rod of length $l$.

\end{lemma}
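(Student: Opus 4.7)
My plan is in two steps. First, I will explicitly build one rod of length $l_0<1$ by rescaling the Moser spindle construction. Second, I will combine $l_0$ with the existing rod length $D$ via Lemma 2 to obtain rods whose lengths are dense in $(0,+\infty)$; density then immediately supplies a rod in every subinterval $(a,b)\subset(0,1)$.

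For the first step, I fix an integer $k\geq 1$ and consider the weighted graph $M_d^{(k)}$ obtained from $M_d$ by reweighting each edge inside $K_1$ or $K_2$ to $D^k$ and each edge from $A$, $B$, $C$ to a vertex of $K_1\cup K_2$ to $D^{k-1}$, while keeping $BC$ at weight $1$; by Corollary 1, Lemma 2, and Lemma 1, all non-unit edges can be replaced by rod subgraphs, producing a unit distance graph. In any embedding, each clique $K_i$ realizes as a regular $(d-1)$-simplex of side $D^k$ with circumradius $D^k r$, $r=\sqrt{(d-1)/(2d)}$; thus a point at distance $D^{k-1}$ from every vertex of $K_i$ lies on the perpendicular axis through the centroid at altitude $h = D^{k-1}\sqrt{1-D^2 r^2} = D^{k-1}/d$. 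Following the reasoning that shows $M_d$ is a rod of length $D$: $|AB|,|AC|\in\{0,2h\}$ in every embedding; the unit edge $|BC|=1$ rules out the collapses $A=B$ and $A=C$ whenever $2h\neq 1$, and the triangle $ABC$ with $|AB|=|AC|=2h$, $|BC|=1$ is realizable whenever $1\leq 4h$. I therefore need $h\in(1/4,1/2)$, i.e.\ $D^{k-1}\in(d/4,d/2)$. This interval has logarithmic length $\log_D 2 = 2\log 2/\log(2(d+1)/d)$, which exceeds $1$ for all $d\geq 2$, so it contains some integer $k-1\geq 0$. The resulting $M_d^{(k)}$ is an $(A,B)$-rod of length $l_0:=2D^{k-1}/d\in(1/2,1)$.

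For the second step, Lemma 2 produces a rod of length $l_0^m D^n$ for every $m,n\geq 0$. By the standard two-generator density result (a consequence of Kronecker's theorem), if $\log l_0/\log D$ is irrational then $\{l_0^m D^n:m,n\geq 0\}$ is dense in $(0,+\infty)$, so meets $(a,b)$. Irrationality of $\log l_0/\log D = (k-1)+\log(2/d)/\log D$ reduces to that of $\log(2/d)/\log D$; a rational relation here is equivalent to a nontrivial integer identity $2^a = d^a(d+1)^b$, and coprimality $\gcd(d,d+1)=1$ together with unique factorization forces both $d$ and $d+1$ to be powers of $2$, possible only for $d=1$, contradicting $d>2$.

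The main obstacle is the rod verification in step one: one must ensure that $|AB|$ is constant over every embedding (not merely generic ones) and that $M_d^{(k)}$ admits a non-critical embedding, both hinging on the strict bound $h\in(1/4,1/2)$ being realized by an integer $k\geq 1$, which in turn depends on the arithmetic inequality $\log_D 2>1$. Once a rod of length below $1$ is available, the density argument of step two is routine.
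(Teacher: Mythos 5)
Your argument is correct, but it reaches the conclusion by a genuinely different route than the paper. The paper attaches a long chain of vertices $v_1,\dots,v_n$ to a common $(d-1)$-clique $K$, so that all $v_i$ lie on a circle of radius $r_0$ with consecutive angular steps $\alpha=\arccos\frac1d$; Niven's theorem gives the irrationality of $\alpha/2\pi$, hence density of the angles and of the chord lengths $\|v_1-v_N\|$ in $(0,2r_0)$, which directly hits any $(a,b)\subset(0,1)$. You instead manufacture a single sub-unit rod length $l_0=2D^{k-1}/d\in(1/2,1)$ by rescaling the Moser spindle, and then get density of $\{l_0^mD^n\}$ multiplicatively; your irrationality input is the elementary unique-factorization analysis of $2^a=d^a(d+1)^b$ rather than Niven's theorem, which is arguably more self-contained. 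The computation $h=D^{k-1}/d$ is right, and the window $D^{k-1}\in(d/4,d/2)$ does contain an admissible power: note that beyond $\log_D 2>1$ you also need the right endpoint $\log_D(d/2)$ to be positive so that the integer you find is nonnegative, which is exactly where $d>2$ enters (for $d=2$ your step one fails, but the lemma is only invoked for $d\geq 3$, where the paper's own proof likewise needs $d\geq 3$ for Niven). Two points are left at the same level of informality as the paper itself: the verification that the reweighted spindle admits a non-critical embedding (the paper asserts the analogous fact for $M_d$ without proof, and the same rotation/genericity argument as in Lemma 1 covers it, using $4h>1$ strictly for the nondegenerate triangle $ABC$), and your one-line dispatch of $2^a=d^a(d+1)^b$, which compresses a small sign case analysis (e.g. when $b<0$ the contradiction comes from a prime of $d+1$ dividing $d^a$, not from both $d$ and $d+1$ being powers of $2$); both are easily completed. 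Net effect: the paper's proof produces the rod in one geometric construction, while yours trades Niven's theorem for elementary arithmetic at the cost of an extra density step in the multiplicative semigroup generated by $l_0$ and $D$.
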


\begin{proof}

Construct $G$ as follows. Choose a set of vertices $K$ of size $d-1$ and connect its elements pairwise by unit length edges. 
Then, take a sequence of vertices $v_1$,~\ldots,~$v_n$ (the exact number of vertices $n$ will be determined later) and connect every vertex of the sequence $v_i$ with every vertex of $K$ by a unit length edge. If the location of vertices of $K$ is fixed, then all vertices $v_1$, \ldots, $v_n$ must lie on some circle centered at $O$, where $O$ is the center of the regular simplex with vertices in $K$. The radius of the circle is equal to the altitude length of the $(d-1)$-face of the regular $d$-simplex with the side length 1, i.e. $\sqrt{\frac{d}{2(d-1)}} = r$. Let $\pi$ denote the plane containing this circle.

For each $i$ from 1 to $n - 1$ connect the vertices $v_i$ and $v_{i+1}$ by an edge of length 1; also for each $i$ from 1 to $n-2$ connect the vertices $v_i$ and $v_{i + 2}$ by an edge of length $D$. Now in every embedding of the graph $G$ the angle $\angle v_i O v_{i+1}$ is equal to the dihedral angle of a regular $d$-simplex; denote this angle $\alpha = \arccos \frac{1}{d}$. Additionally, the least rotation of the plane $\pi$ about the point $O$ that moves the point $v_i$ to $v_{i+1}$ has the same direction for every $i$. It is clear that no three vertices of $G$ lie on a straight line.

Let us introduce an angular coordinate system $\psi$ on $\pi$ centered at $O$ such that $\psi(v_1) = 0$, $\psi(v_2) = \alpha$. 
Clearly, $\psi(v_i) = (i - 1) \alpha \mod 2 \pi$
(by $\alpha \mod 2 \pi$ we mean $\alpha + k \times 2 \pi$ for an integer $k$ such that
$0 \leq \alpha + k \times 2\pi < 2\pi$). By Niven's theorem (see \cite{niven1956irrational}, Corollary 3.12), $\alpha / 2 \pi$ can not 
be a rational number when $d \geq 3$, therefore the infinite sequence $x_i = (i - 1) \alpha \mod 2 \pi$ is dense in $[0; 2 \pi]$.
Thus there exists a positive integer $N$ such that $x_N \in (2 \arcsin {\frac{a}{2r}}; 2 \arcsin{\frac{b}{2r}})$ and $||v_1 - v_N|| \in (a; b)$.

It follows from the above that the graph $G$ is a $d$-dimensional $(v_1, v_N)$-rod. Finally, successively apply Lemma 2 to each $D$-length edge of the graph $G$ and the graph $M_d$; the resulting graph is a unit distance $d$-dimensional $(v_1, v_N)$-rod that satisfies all the conditions.

\end{proof}

%\pagebreak
%\newtheorem*{rod_ab}{Следствие 2}

\begin{theorem} For all numbers $a, b$ such that $0 < a < b$ there exists a number $l$ satisfying $a < l < b$ and a graph $G$ such that $G$ is a unit-distance $d$-dimensional rod of length $l$.

\end{theorem}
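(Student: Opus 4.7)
The plan is to reduce the general case to the already-proven restricted case in Lemma 3, using the rods of length $D^k$ supplied by Corollary 1 as a multiplicative scaling device via Lemma 2.

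First I would handle the trivial range: if $b \leq 1$, then Lemma 3 applied to the pair $(a, b)$ (or to $(a, \min(b, 1))$) already produces the desired rod, so assume $b > 1$. Since $D > \sqrt 2 > 1$, I would pick an integer $k \geq 0$ large enough that $D^k > b$, so that the rescaled interval $(a/D^k,\, b/D^k)$ is contained in $(0, 1)$.

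Next I would apply Lemma 3 to the numbers $a' = a/D^k$ and $b' = b/D^k$, producing a number $l'$ with $a' < l' < b'$ and a unit distance $d$-dimensional rod $G_1$ of length $l'$. Separately, Corollary 1 gives a unit distance $d$-dimensional rod $G_2$ of length $D^k$. Applying Lemma 2 to $G_1$ and $G_2$ then produces a unit distance $d$-dimensional rod $G$ of length $l' \cdot D^k$, and by the choice of $l'$ this length lies in the open interval $(a, b)$, as required.

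There is no genuine obstacle here: the only thing to verify is that an admissible $k$ exists and that the resulting scaled interval is a valid input for Lemma 3, both of which follow at once from $D > 1$ and from $b/D^k < 1$. The argument is essentially that the set of achievable rod lengths is closed under multiplication (Lemma 2), contains $D^k$ for every $k \geq 0$ (Corollary 1), and is dense in $(0, 1)$ (Lemma 3), hence is dense in $(0, \infty)$.
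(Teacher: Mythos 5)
Your argument is correct and follows essentially the same route as the paper: choose $k$ with $D^k > b$, apply Lemma 3 to $a/D^k$ and $b/D^k$, and combine the resulting rod with the length-$D^k$ rod from Corollary 1 via Lemma 2. The separate treatment of the case $b \leq 1$ is harmless but unnecessary, since the uniform choice of $k$ already covers it.
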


\begin{proof}

Choose a non-negative integer $k$ such that $D^k > b$ and denote $G'$ the rod obtained by applying Lemma 3 for numbers $\frac{a}{D^k}$ and $\frac{b}{D^k}$. Now apply Lemma 2 to the graph $G'$ and the rod of length $D^k$.

\end{proof}

Let $\text{RodLength}(a, b)$ denote the number $l$ produced by Theorem 2 for given numbers $a$ and $b$,
and $\text{Rod}(a, b)$ denote the rod of corresponding length.

\section{The reduction setup}

Consider a graph $G = (V_G, E_G)$~--- the input of the \textbf{3-COLORING} problem. We now construct a weighted graph $H = (V_H, E_H, w_H)$ = 3-COLORING-$\mathbb{R}^d$-EMBEDDABI\-LI\-TY-RE\-DUCTION($G$) such that the embeddability of $H$ in $\mathbb{R}^d$ is equivalent to the existence of a solution to the \textbf{3-COLORING} for the graph $G$. We shall idenitify the elements of $V_G$ and the integers from 1 to $|V_G|$ for the sake of convenience.

%\pagebreak

To establish properties of the following setup we will need the following

\begin{lemma}

Let $0 \leq l < L \leq R < r$, $\delta = \min(L - l, r - R)$.\\
Let also $G = (V_G, E_G, w_G)$, $H = (V_H, E_H, w_H)$~--- weighted graphs,\\
$v, u \in V_G$, $V_H = V_G \sqcup \{z\}$, $E_H = E_G \sqcup \{(v, z), (u, z)\}$,\\
$w_H(e) = w_G(e)$ for all $e \in E_G$,\\
$w_H((v, z)) = a \in \left(\frac{L + R}{2} - \frac{\delta}{3}; \frac{L + R}{2} + \frac{\delta}{3}\right)$,\\
$w_H((u, z)) = b \in \left(\frac{R - L}{2} + \frac{\delta}{3}; \frac{R - L}{2} + \frac{\delta}{2}\right)$.\\
Then:

\begin{itemize}

\item In every embedding of the graph $H$ the inequalities $l < ||v - u|| < r$ hold.

\item If there exists a \emph{non-critical} embedding of $G$ such that $L \leq ||v - u|| \leq R$, then there exists a \emph{non-critical} embedding of $H$.

\end{itemize}

\end{lemma}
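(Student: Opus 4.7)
The first claim reduces to the triangle inequality applied to the triple $v, u, z$: in any embedding of $H$ we have $|a-b| \leq \|v-u\| \leq a+b$. A short computation with the prescribed intervals for $a$ and $b$ gives $a+b \in (R, R + \tfrac{5\delta}{6})$ and $a - b \in (L - \tfrac{5\delta}{6}, L)$. Since $\delta \leq \min(L-l, r-R)$, this yields $a+b < R + (r-R) = r$ and $a - b > L - (L-l) = l$; in particular $a > b \geq 0$, so $|a-b| = a - b$, and combining with the triangle inequality we get $l < \|v-u\| < r$.

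For the second claim, start from a non-critical embedding $\varphi_G$ of $G$ with $L \leq \|v-u\| \leq R$. The bounds just derived show $a - b < L$ and $a + b > R$ \emph{strictly}, so the intersection $\Sigma$ of the spheres $S(\varphi_G(v), a)$ and $S(\varphi_G(u), b)$ in $\mathbb{R}^d$ is a $(d-2)$-dimensional sphere of positive radius, lying in the hyperplane perpendicular to the line through $\varphi_G(u)$ and $\varphi_G(v)$. Since $d \geq 3$, $\Sigma$ has positive spherical measure and, being compact, contains no straight line. Extend $\varphi_G$ to $H$ by placing $\varphi_H(z)$ at some point of $\Sigma$; every such placement automatically satisfies the new edge-length constraints.

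It remains to pick $\varphi_H(z) \in \Sigma$ so that the extension stays non-critical, and this I would do by a standard measure-zero argument, handling the three failure modes in turn. Coincidence of $z$ with an existing vertex is a finite set of single points. Three-point collinearity involving $z$ is ruled out as follows: triples inside $V_G$ are excluded by non-criticality of $\varphi_G$, and any triple involving $z$ constrains $z$ to a line (determined by the other two vertices), which meets the bounded $\Sigma$ in at most two points. Unit distance from $z$ to some $w \in V_G \setminus \{u, v\}$ forces $\varphi_H(z)$ onto the unit sphere about $\varphi_G(w)$; generically this sphere cuts $\Sigma$ in a proper subsphere of strictly smaller dimension, and the only way for it to \emph{contain} all of $\Sigma$ would be for $\varphi_G(w)$ to lie on the axis through $\varphi_G(u)$ and $\varphi_G(v)$. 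Since there are only finitely many bad vertices and each contributes a measure-zero set, almost every $\varphi_H(z) \in \Sigma$ works.

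\emph{Main obstacle.} The only step that genuinely uses a hypothesis beyond openness of the intervals and elementary dimension counting is the exclusion of $\Sigma$ being swallowed by some unit sphere around a third vertex $w$; this is the place where the ``no three collinear vertices'' clause of non-criticality of $\varphi_G$ is essential, since it forbids precisely the configuration $w \in \mathrm{line}(\varphi_G(u), \varphi_G(v))$ that would cause the argument to collapse.
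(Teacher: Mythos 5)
Your proposal is correct and follows essentially the same route as the paper: the first claim is the identical interval computation showing $l < a-b < L \le R < a+b < r$ combined with the triangle inequality, and the second claim places $z$ on the $(d-2)$-sphere of admissible positions (equivalently, the rotation orbit about the line $uv$ that the paper uses) and discards a measure-zero set of bad positions. You actually spell out the genericity argument — including the key observation that a unit sphere about a third vertex can contain $\Sigma$ only if that vertex lies on the line $uv$, which non-criticality forbids — whereas the paper just cites the analogous argument from its Lemma 1.
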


\begin{proof}

First of all, let us show that $l < a - b < L \leq R < a + b < r$. Indeed,
\[a - b >\left (\frac{L + R}{2} - \frac{\delta}{3}\right) - \left(\frac{R - L}{2} + \frac{\delta}{2}\right) > L - \delta \geq l;\]
\[a - b < \left(\frac{L + R}{2} + \frac{\delta}{3}\right) - \left(\frac{R - L}{2} + \frac{\delta}{3}\right) = L;\]
\[a + b > \left(\frac{L + R}{2} - \frac{\delta}{3}\right) - \left(\frac{R - L}{2} + \frac{\delta}{3}\right) = R;\]
\[a + b < \left(\frac{L + R}{2} +\frac{\delta}{3}\right) + \left(\frac{R - L}{2} + \frac{\delta}{2}\right) < R + \delta \leq r.\]

Consider any embedding of the graph $H$. It follows from the triangle inequality applied to vertices $v, z, u$ that $l < a - b \leq ||u - v|| \leq a + b < r$. The first claim is thus proven.

Now, consider a non-critical embedding of the graph $G$ such that $||u - v|| \in [L; R]$. It follows from $a - b < L \leq ||u - v|| \leq R < a + b$ that it is possible to place the vertex $z$ in such a way that $||v - z|| = a$, $||u - z|| = b$ and $z$ does not lie on the line $vu$. We have obtained an embedding of the graph $H$; it is possible to modify this embedding to obtain a non-critical embedding by choosing an appropriate rotation of $z$ about the line $vu$; the proof of the existence of such rotation copies the similar proof from Lemma 1 almost entirely.

\end{proof}

Denote $r_0 = \sqrt{\frac{d}{2(d - 1)}}$, $\text{chord}(\alpha) = 2r_0 \sin{\alpha / 2}$~--- the length of the chord which contracts an $\alpha$-measured arc of a circle of radius $r_0$, $\varepsilon = \frac{\pi}{24}$.

Let us introduce additional notation as follows:

$\delta_{uu} = \min(\text{chord}(2\pi/3) - \text{chord}(2\pi / 3 - \varepsilon / 2), \text{chord}(2\pi / 3 + \varepsilon / 2) - \text{chord}(2\pi/3))$,

$a_{uu} = \text{RodLength}(\text{chord}(2\pi/3) - \delta_{uu}/3, \text{chord}(2\pi/3) + \delta_{uu}/3)$,

$b_{uu} = \text{RodLength}(\delta_{uu}/3, \delta_{uu}/2)$,

$\delta_{uv} = \text{chord}(\pi/3 - \varepsilon / 2) - \text{chord}(\pi/3 - \varepsilon)$,

$a_{uv} = \text{RodLength}\left(\frac{2r_0 + \text{chord}(\pi/3 - \varepsilon / 2)}{2} - \delta_{uv}/3, \frac{2r_0 + \text{chord}(\pi/3 - \varepsilon / 2)}{2} + \delta_{uv}/3\right)$,

$b_{uv} = \text{RodLength}\left(\frac{2r_0 - \text{chord}(\pi/3 - \varepsilon / 2)}{2} + \delta_{uv}/3, \frac{2r_0 - \text{chord}(\pi/3 - \varepsilon / 2)}{2} + \delta_{uv}/2\right)$,

$\delta_{vv} = \text{chord}\left(\frac{2}{3}\pi - \varepsilon) - \text{chord}(\frac{5}{2}\varepsilon\right)$,

$a_{vv} = \text{RodLength}\left(\frac{2r_0 + \text{chord}(2\pi / 3 - \varepsilon)}{2} - \delta_{vv}/3, \frac{2r_0 + \text{chord}(2\pi / 3 - \varepsilon)}{2} + \delta_{vv}/3\right)$,

$b_{vv} = \text{RodLength}\left(\frac{2r_0 - \text{chord}(2\pi / 3 - \varepsilon)}{2} + \delta_{vv}/3, \frac{2r_0 - \text{chord}(2\pi / 3 - \varepsilon)}{2} + \delta_{vv}/2\right)$.

\bigskip

Construct $H$ = 3-COLORING-$\mathbb{R}^d$-EMBEDDABILI\-TY-REDUCTION($G$) as follows: \[V_H = K \cup U \cup V \cup Aux, E_H = E_K \cup E_{KU} \cup E_{KV} \cup E_U \cup E_{VU} \cup E_V.\] Here:

\begin{itemize}

\item $Aux$ is the set of all auxiliary vertices used in the sequel of the description ($aux_{\ldots}$);

\item $K$~--- the set of vertices of size $d-1$;
\\ $E_K$~--- the set of edges connecting all pairs of vertices of $K$;

\item $U = \{u_0, u_1, u_2\}$;\\
$E_{KU}$~--- the set of edges connecting every vertex of $U$ with every vertex of $K$;\\
$E_{U} = \{(u_0, aux_{u_0, u_1}), (aux_{u_0, u_1}, u_1), (u_0, aux_{u_0, u_2}), (aux_{u_0, u_2}, u_2),
(u_1, aux_{u_1, u_2}),\\ (aux_{u_1, u_2}, u_2)\}$;

\item $V = \{v_1, \ldots, v_{|V_G|}\}$;\\
$E_{KV}$~--- the set of edges connecting every vertex of $V$ with every vertex of $K$;\\
$E_{VU} = \displaystyle\bigcup_{v \in V}\bigcup_{u \in U}\{(v, aux_{v, u}), (aux_{v, u}, u)\}$;\\
$E_{V} = \displaystyle\bigcup_{v_i, v_j: (i, j) \in E_G}\{(v_i, aux_{v_i, v_j}), (aux_{v_i, v_j}, v_j)\}$.

\end{itemize}

The edge lengths are assigned as follows:

\begin{itemize}

\item $e \in E_K \cup E_{KU} \cup E_{KV} \Rightarrow w_H(e) = 1$;

\item $w_H(u_0, aux_{u_0, u_1}) = w_H(u_0, aux_{u_0, u_2}) = w_H(u_1, aux_{u_1, u_2}) = a_{uu}$,\\
$w_H(aux_{u_0, u_1}, u_1) = w_H(aux_{u_0, u_2}, u_2) = w_H(aux_{u_1, u_2}, u_2) = b_{uu}$;

\item for all $v\in V$, $u \in U$: $w_H(v, aux_{v, u}) = a_{uv}$, $w_H(aux_{v, u}, u) = b_{uv}$;

\item for all pairs $v_i, v_j \in V$ such that $(i, j) \in E_G$: $w_H(v_i, aux_{v_i, v_j}) = a_{vv}$,\\
$w_H(aux_{v_i, v_j}, v_j) = b_{vv}$.

\end{itemize}

\begin{theorem}

Let $G = (V_G, E_G)$ and $H$ = 3-COLORING-$\mathbb{R}^d$-EMBEDDABILI\-TY-REDUCTION$(G)$ described above. Then:

\begin{itemize}

\item If there is no valid 3-coloring of vertices of $G$, then there is no embedding of $H$ in $\mathbb{R}^d$.

\item If a valid 3-coloring of vertices of $G$ exists, then there exists a non-critical embedding of $H$ in $\mathbb{R}^d$.

\end{itemize}

\end{theorem}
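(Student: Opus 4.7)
The plan is to reduce embeddability of $H$ to geometry on a single circle and then to a 3-coloring problem. The backbone comes from the clique $K$: in any embedding, the $d-1$ mutually unit-distance vertices of $K$ span a regular $(d-2)$-simplex in a $(d-2)$-dimensional affine subspace of $\mathbb{R}^d$, and the locus of points at unit distance from all of $K$ is a circle $\Pi$ of radius $r_0 = \sqrt{d/(2(d-1))}$ lying in the 2-dimensional orthogonal complement. Every vertex of $U \cup V$ is thus confined to $\Pi$, and I would parameterize these vertices by angle, exploiting the strict monotonicity $\alpha \mapsto \text{chord}(\alpha)$ of Euclidean distance in angular distance on $[0,\pi]$.

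For the forward direction (no 3-coloring implies no embedding), I would apply the first bullet of Lemma 4 to every $a_{\cdot\cdot}$/$b_{\cdot\cdot}$ pair to translate the weighted constraints into angular intervals on $\Pi$. The $u_i$-$u_j$ constraints force the pairwise angular distances among $u_0, u_1, u_2$ to lie in the open interval $(2\pi/3 - \varepsilon/2,\, 2\pi/3 + \varepsilon/2)$, pinning the $u_i$'s (up to rotation and reflection) near the three equally spaced positions $0,\, 2\pi/3,\, 4\pi/3$. The $v$-$u$ constraints then force every $v \in V$ to have angular distance strictly greater than $\pi/3 - \varepsilon$ from each $u_i$, which confines $v$ to one of three disjoint ``color arcs'' straddling the midpoints between consecutive $u_i$'s; each such arc has angular width equal to the arc between consecutive $u_i$'s minus $2(\pi/3 - \varepsilon)$, hence strictly less than $(2\pi/3 + \varepsilon/2) - 2(\pi/3 - \varepsilon) = 5\varepsilon/2$. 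Defining $c(v) \in \{0,1,2\}$ as the index of the arc containing $v$, the $v_i$-$v_j$ constraints for $(i,j) \in E_G$ force angular distance strictly greater than $5\varepsilon/2$, so two adjacent $v$'s cannot share an arc. Hence $c$ is a valid 3-coloring, contradicting the hypothesis.

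For the reverse direction, given a valid 3-coloring $c$, I would build the embedding from the outside in. Place $K$ as a regular $(d-2)$-simplex, put $u_0, u_1, u_2$ at angles $0,\, 2\pi/3,\, 4\pi/3$ on $\Pi$, and place each $v \in V$ of color $c(v)$ at the angle $(2c(v)+1)\pi/3$ with a small generic perturbation. The perturbations must be small enough that every main-vertex distance lies in the closed interval $[L,R]$ required by the second bullet of Lemma 4 for its edge type (the ideal distances $\text{chord}(2\pi/3)$, $\text{chord}(\pi/3)$, $\text{chord}(2\pi/3)$ for $uu$-, $vu$-, $vv$-edges already lie well inside those intervals, so there is room), and generic enough that the main-vertex configuration has no coincidences, extra unit distances, or three-point collinearities. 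Then I would insert the auxiliary vertices one at a time, invoking the second bullet of Lemma 4 at each step to extend the non-critical embedding.

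The main obstacle I expect is the $\varepsilon$-bookkeeping: one must verify that the intervals defining $a_{uu}, b_{uu}$, $a_{uv}, b_{uv}$, $a_{vv}, b_{vv}$ translate, via the $(l,L,R,r)$ recipe of Lemma 4, into exactly the angular bounds $(2\pi/3 - \varepsilon/2,\, 2\pi/3 + \varepsilon/2)$, $(\pi/3 - \varepsilon,\, \infty)$, and $(5\varepsilon/2,\, \infty)$ used above, and that all strictness is preserved so that two same-arc $v$'s strictly fail the $5\varepsilon/2$ threshold. Once this arithmetic is written down, the geometric argument closes both directions cleanly, with Theorem 2 supplying the unit-distance rods of lengths $a_{\cdot\cdot}, b_{\cdot\cdot}$ that realize every weighted edge of $H$ by a unit-distance subgraph.
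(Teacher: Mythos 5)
Your proposal follows essentially the same route as the paper: confine $U \cup V$ to the circle of radius $r_0$ determined by the clique $K$, use the first part of Lemma~4 to turn the weighted $aux$ gadgets into the angular bounds $(2\pi/3 \pm \varepsilon/2)$, $(\pi/3 - \varepsilon, \infty)$, $(\frac{5}{2}\varepsilon, \infty)$, read off a 3-coloring from which inter-$u_i$ arc each $v$ occupies, and conversely place the $v$'s near the arc midpoints and reinsert the auxiliary vertices one at a time via the second part of Lemma~4. The only cosmetic difference is that you bound the width of each admissible arc by $5\varepsilon/2$ directly, while the paper derives the same contradiction from the angle sum $\angle u_0 O v_i + \angle v_i O v_j + \angle v_j O u_1$; the arithmetic is identical.
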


\begin{proof}

Consider an embedding of the graph $H$ in $\mathbb{R}^d$; let us construct a valid 3-coloring of vertices of $G$. In every embedding of $H$ all vertices of the set $V \cup U$ are located on some circle of radius $r_0$; denote this circle $\rho$ and its center $O$.

Successively apply the first part of Lemma 4 with the following parameters.

\begin{itemize}

\item Let $v = u_0$, $u = u_1$, $z = aux_{u_0, u_1}$, $l = \text{chord}(2\pi/3 - \varepsilon/2)$, $L = R = \text{chord}(2\pi / 3)$, $r = \text{chord}(2\pi/3 + \varepsilon/2)$.\\
We obtain that $\text{chord}(2\pi / 3 - \varepsilon/2) < ||u_0 -  u_1|| < \text{chord}(2\pi / 3 + \varepsilon/2)$, which is equivalent to $2\pi / 3 - \varepsilon /2 < \angle u_0 O u_1 < 2\pi / 3 + \varepsilon / 2$. We can establish similar inequalities for $u_0$, $u_2$ and $u_1$, $u_2$.

\item Let $v \in V$, $u \in U$, $z = aux_{v, u}$, $l = \text{chord}(\pi / 3 - \varepsilon)$, $L = \text{chord}(\pi/3 - \varepsilon / 2)$, $R = 2r_0$, $r = \infty$.\\
Then $\text{chord}(\pi / 3 - \varepsilon) < ||v - u||$, which is equivalent to $\pi / 3 - \varepsilon < \angle v O u$.

\item Let $v_i, v_j \in V$, $(i, j) \in E_G$: $v = v_i$, $u = v_j$, $z = aux_{v_i, v_j}$, $l = \text{chord}\left(\frac{5}{2}\varepsilon\right)$, $L = \text{chord}(2\pi / 3 - \varepsilon)$, $R = 2r_0$, $r = \infty$.\\
Then $\text{chord}\left(\frac{5}{2}\varepsilon\right) < ||v_i - v_j||$, which is equivalent to $\frac{5}{2}\varepsilon < \angle v_i O v_j$.

\end{itemize}

Construct the coloring of vertices of $G$ as follows: if the vertex $v_i$ lies on the shortest arc between $u_0$ and $u_1$ in the embedding of $H$, the vertex $i \in V_G$ is assigned with the color $c(i) = 2$; if $v_i$ lies on the shortest arc between $u_0$ and $u_2$, then $c(i) = 1$;
otherwise, $c(i) = 0$. It is clear that this coloring is unambigiously defined for any embedding of $H$. We now prove that this coloring of vertices of $G$ is valid, that is, for every edge $(i, j) \in E(G)$ we have $c(i) \neq c(j)$.

Let us show that for every edge $(i, j) \in E_G$ the shortest arc of $\rho$ between the points $v_i$ and $v_j$ contains at least one vertex of $U$. Assume the contrary,
then WLOG both vertices $v_i$ and $v_j$ lie on the shortest arc between $u_0$ and $u_1$, and
$\angle u_0 O u_1 = \angle u_0 O v_i + \angle v_i O v_j + \angle v_j O u_1 > (\pi / 3 - \varepsilon) + \frac{5}{2}\varepsilon +
(\pi / 3 - \varepsilon) = 2\pi / 3 + \varepsilon / 2$. But that contradicts with $\angle u_0 O u_1 < 2\pi/3 + \varepsilon / 2$, thus at least one vertex of $U$ must lie between $v_i$ and $v_j$. In that case the colors of $i$ and $j$ are different; therefore the coloring is valid.
The first part of Theorem 3 is thus proven.

Denote $H' = (V_{H'}, E_{H'}, w_{H'})$, where $V_{H'} = K \cup U \cup V$, $E_{H'} = E_K \cup E_{KU} \cup E_{KV}$, $w_{H'} \equiv 1$. Clearly, $H'$ is a subgraph of $H$.

Now consider a valid vertex 3-coloring of $G$; let us construct a non-critical embedding of $H$.
First, construct a non-critical embedding of $H'$ as follows:

\begin{itemize}

\item choose an arbitrary regular $(d-2)$-simplex with edge length 1 and identify its vertices with vertices of $K$; let $O$ denote the center of the simplex and $\rho$ denote the locus of all points at the distance 1 from all vertices of the simplex; clearly, $\rho$ is a circle of radius $r_0$;

\item choose an arbitrary equilateral triangle inscribed in $\rho$; place the vertices $u_0$, $u_1$, $u_2$ at the vertices of the triangle;
denote $\gamma_0$ the set of all points $x \in \rho$ such that $\angle u_0 O x > \pi - \varepsilon/2$;
clearly, $\gamma_0$ is an open arc of angular measure $\varepsilon$; similarly define sets $\gamma_1$, $\gamma_2$;

\item suppose the vertex $i \in V_G$ is assigned with color $c(i) \in \{0, 1, 2\}$ in the given 3-coloring; place every vertex $v_i \in V$ in such a way that $v_i$ lies on the arc $\gamma_{c(i)}$ for every $i \in V_G$ and no two vertices of $V$ are at the same point; since the arcs $\gamma_0$, $\gamma_1$, $\gamma_2$ have non-zero angular measure, such arrangement of vertices of $V$ is possible.

\end{itemize}

It can be easily verified that the arrangement of vertices of $K \cup V \cup U$ described above yields a non-critical embedding of the graph $H'$.

Now let us add vertices of the set $Aux$ one by one and successively apply the second part of Lemma 4 to show the existence of a non-critical embedding for every new graph. When all vertices of $Aux$ are added, we obtain a non-critical embedding of the graph $H$ since every vertex of $Aux$ is adjacent to exactly two vertices of $V \cup U$.

Successively apply the second part of Lemma 4 with the following parameters.

\begin{itemize}

\item Let $v = u_0$, $u = u_1$, $z = aux_{u_0, u_1}$, $l = \text{chord}(2\pi/3 - \varepsilon/2)$, $L = R = \text{chord}(2\pi / 3)$, $r = \text{chord}(2\pi/3 + \varepsilon/2)$.\\
The points $u_0$ and $u_1$ are at the vertices of an equilateral triangle inscribed in the circle $\rho$, thus $||u_0 - u_1|| = \text{chord}(2\pi / 3)$ and the conditions of the lemma are satisfied.\\
Apply the lemma in a similar way to $u_0$, $u_2$, $aux_{u_0, u_2}$ and $u_1$, $u_2$, $aux_{u_1, u_2}$.

\item Let $v \in V$, $u \in U$, $z = aux_{v, u}$, $l = \text{chord}(\pi / 3 - \varepsilon)$, $L = \text{chord}(\pi/3 - \varepsilon / 2)$, $R = 2r_0$, $r = \infty$.\\
There is at least one vertex $u_i \in U$ such that $\angle v O u_i >\pi - \varepsilon/2$, thus $\angle v O u \geq |\angle v O u_i - \angle u_iOu| > \pi / 3 - \varepsilon / 2$ and $||v - u|| > \text{chord}(\pi / 3 - \varepsilon / 2)$; the conditions of the lemma are satisfied.

\item Let  $v_i, v_j \in V$, $(i, j) \in E_G$: $v = v_i$, $u = v_j$, $z = aux_{v_i, v_j}$, $l = \text{chord}(\frac{5}{2}\varepsilon)$, $L = \text{chord}(2\pi / 3 - \varepsilon)$, $R = 2r_0$, $r = \infty$.\\
The points $v_i$ and $v_j$ lie on different arcs $\gamma_{c(i)}$, $\gamma_{c(j)}$. Let $u_k$ denote the vertex of $U$ that lies on the shortest arc between $v_i$ and $v_j$. Then $\angle v_i O v_j = \angle v_i O u_k + \angle u_k O v_j > 2 (\pi / 3 - \varepsilon / 2) = 2\pi / 3 - \varepsilon$, and $||v_i - v_j|| > \text{chord}(2\pi / 3 - \varepsilon)$; the conditions of the lemma are satisfied.

\end{itemize}

After all applications of Lemma 4 we obtain a non-critical embedding of the graph $H$.

\end{proof}

The constructed graph $H$ has $O(|V_G| + |E_G|)$ vertices and edges (we recall that the dimension $d$ is a fixed constant), but it contains edges of non-unit length; however, for every such edge its length is equal to $\text{RodLength}(a, b)$ for some $a, b$; moreover, the set of possible pairs $(a, b)$ is finite and independent on the input graph $G$. Thus, upon multiple applications of Lemma 2 each edge of non-unit length can be replaced by a subgraph that is isomorphic to $\text{Rod}(a, b)$ for some $(a, b)$; the size of the graph will increase by at most $K$ times, where $K$ is the maximal size of $\text{Rod}(a, b)$ for all used pairs of $(a, b)$; clearly, the value of $K$ depends only on $d$. Therefore the resulting graph $H'$ = 3-COLORING-$\mathbb{R}^d$-UNIT-DISTANCE-EMBEDDABI\-LI\-TY-REDUCTION$(G)$ has $O(|V_G| + |E_G|)$ vertices and edges as well. Finally, we obtain

\begin{theorem}

Let the graph $H'$ = 3-COLORING-$\mathbb{R}^d$-UNIT-DISTANCE-EMBEDDA\-BI\-LI\-TY-\-RE\-DUCTION$(G)$ be constructed by a given graph $G = (V_G, E_G)$ as described above. Then:

\begin{itemize}

\item If there is no valid 3-coloring of vertices of $G$, then there is no embedding of $H'$ in $\mathbb{R}^d$.

\item If a valid 3-coloring of vertices of $G$ exists, then there exists a non-critical embedding of $H'$ in $\mathbb{R}^d$.

\end{itemize}

\end{theorem}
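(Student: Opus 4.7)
The plan is to derive Theorem 4 from Theorem 3 by iteratively replacing each non-unit weighted edge of $H$ with a fixed unit distance rod, using Lemma 1 to propagate both embeddability conclusions through the replacement. The key preliminary observation is that $H$ contains only six distinct non-unit edge weights (namely $a_{uu}, b_{uu}, a_{uv}, b_{uv}, a_{vv}, b_{vv}$), each of the form $\text{RodLength}(a, b)$ for one of six fixed pairs $(a, b)$ depending only on $d$. For each such pair, Theorem 2 supplies a unit distance $d$-dimensional rod $\text{Rod}(a, b)$ of the required length whose size is bounded by a constant that depends only on $d$; these six rods are fixed once and for all, independently of the input $G$.

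First I would enumerate the non-unit edges of $H$ as $e_1, \ldots, e_m$ with $e_i = (x_i, y_i)$, set $H_0 := H$, and inductively build $H_i$ by applying Lemma 1 to $H_{i-1}$ with the rod $R_i$, where $R_i$ is a fresh isomorphic copy of $\text{Rod}(a, b)$ for the pair $(a, b)$ corresponding to the weight of $e_i$. The two distinguished vertices of $R_i$ are identified with $x_i$ and $y_i$, and the remaining vertices of $R_i$ are renamed so as to be disjoint from the vertex set of $H_{i-1}$. After $m$ steps, $H_m$ is a unit distance graph that coincides, up to renaming of auxiliary vertices, with the graph $H'$ described in the paragraph preceding Theorem 4.

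Both conclusions of Theorem 4 then follow by induction on $i$. If $G$ has no valid 3-coloring, Theorem 3 gives no embedding of $H_0 = H$, and the first bullet of Lemma 1 propagates non-embeddability from $H_{i-1}$ to $H_i$, so $H' = H_m$ has no embedding in $\mathbb{R}^d$. Conversely, if $G$ has a valid 3-coloring, Theorem 3 yields a non-critical embedding of $H_0$, and the second bullet of Lemma 1 produces a non-critical embedding of $H_i$ from one of $H_{i-1}$; iterating gives a non-critical embedding of $H'$ in $\mathbb{R}^d$.

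The only genuine care needed is bookkeeping at each inductive step: I must check that $V(R_i) \cap V(H_{i-1}) = \{x_i, y_i\}$, that $e_i \in E(H_{i-1})$, and that the weight of $e_i$ in $H_{i-1}$ equals the length of $R_i$. The first holds because internal vertices of $R_i$ are given fresh names; the other two are preserved inductively because every earlier application of Lemma 1 removed only the single edge it was invoked on and left the remaining edges and weights of $H$ intact. Once this verification is spelled out, the theorem reduces to a finite iteration of Lemma 1 with Theorem 3 as the base case, and no further difficulty arises.
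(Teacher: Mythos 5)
Your proposal is correct and follows essentially the same route as the paper: the paper likewise observes that the non-unit edge lengths of $H$ come from a finite, $G$-independent set of values of the form $\text{RodLength}(a,b)$, replaces each such edge by the corresponding unit distance rod via the edge-replacement lemma (Lemma 1), and transfers both conclusions from Theorem 3 through each replacement. Your explicit induction and the disjointness bookkeeping are just a more careful spelling-out of what the paper leaves implicit.
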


From Theorem 4, the linearity of the size of $H'$, and the fact that the problem of vertex 3-coloring is NP-hard (see  \cite{karp1972reducibility}) Theorem 1 eventually follows.

\pagebreak

\bibliographystyle{ieeetr}
\bibliography{erdos1946sets,de1951colour,karp1972reducibility,chilakamarri1993unit,saxe1980embeddability,horvat2011computational,lovasz1983self,raigorodskii2010chromatic,raigorodskii2012chromatic,niven1956irrational,brass2005research,raigorodskii2013coloring,raigorodskii2014cliques,raigorodskii2001borsuk}

\end{document}